\def\BibTeX{{\rm B\kern-.05em{\sc i\kern-.025em b}\kern-.08em
    T\kern-.1667em\lower.7ex\hbox{E}\kern-.125emX}}
\newtheorem{theorem}{Theorem}
\newtheorem{lemma}{Lemma}
\newtheorem{definition}{Definition}
	\theoremstyle{definition}
\newtheorem{example}{Example}
	\theoremstyle{remark}
\def\th@plain{%
  \thm@notefont{}
  \itshape 
}
\def\th@definition{%
  \thm@notefont{}
  \normalfont 
}
\newcommand{\weight}{\vartheta}
\newcommand{\Bigcup}{\text{\Large{$\Cup$}}}
\newcommand{\one}{\texttt{\bfseries one}}
\newcommand{\skis}{\texttt{\bfseries skis}}
\begin{document}

\title{Kleene Algebra With Tests for Weighted Programs\thanks{This work was supported by the grant GA22-16111S of the Czech Science Foundation. The author is grateful to the reviewers for useful comments.}}

\author{\IEEEauthorblockN{Igor Sedl\'{a}r}
\IEEEauthorblockA{\textit{The Czech Academy of Sciences, Institute of Computer Science} \\
Prague, The Czech Republic \\
sedlar@cs.cas.cz}
}

\maketitle

\begin{abstract}
Weighted programs generalize probabilistic programs and offer a framework for specifying and encoding mathematical models by means of an algorithmic representation. Kleene algebra with tests is an algebraic formalism based on regular expressions with applications in proving program equivalence. We extend the language of Kleene algebra with tests so that it is sufficient to formalize reasoning about a simplified version weighted programs. We introduce relational semantics for the extended language, and we generalize the relational semantics to an appropriate extension of Kleene algebra with tests, called Kleene algebra with weights and tests. We demonstrate by means of an example that Kleene algebra with weights and tests offers a simple algebraic framework for reasoning about equivalence and optimal runs of weighted programs.
\end{abstract}

\begin{IEEEkeywords}
Kleene algebra with tests, program equivalence, program semantics, regular programs, weighted programs
\end{IEEEkeywords}

\section{Introduction}

\emph{Weighted programs} \cite{BatzEtAl2022} add two features to standard while programs \cite{AptEtAl2009}: (i) nondeterministic branching, and (ii) the ability to weight the current execution trace. As such, they generalize probabilistic programs \cite{BartheEtAl2020}, in which execution can branch based on the outcome of a random coin flip.

Batz et al.~\cite{BatzEtAl2022} argue that, in addition to being an interesting generalization, weighted programs offer a useful programming paradigm for specifying and encoding mathematical models, such as optimization problems or probability distributions, by means of an algorithmic representation. They introduce structural operational semantics and a Dijkstra-style weakest precondition calculus for weighted programs, and they use this calculus to determine competitive ratios of weighted programs.

Kleene algebra with tests \cite{Kozen1997}, \textsf{KAT}, is an extension of the algebraic theory of regular languages that provides a simple algebraic formalism for equational specification and verification of while programs, and for proving equivalence of programs. In this note we extend the language of \textsf{KAT} so that it is sufficient to formalize reasoning about a propositional abstraction of weighted programs where assignments are replaced by unstructured atomic actions. We introduce relational semantics for the extended language, and we generalize the relational semantics to an appropriate extension of Kleene algebra with tests, called Kleene algebra with weights and tests, or \textsf{KAWT}. We demonstrate by means of an example that \textsf{KAWT} offers a simple algebraic framework for reasoning about equivalence and optimal runs of weighted programs.

The note is structured as follows. Section \ref{sec:weighted} introduces \emph{weighted regular programs}, an expansion of regular programs with weights, and shows that weighted regular programs can express a propositional abstraction of weighted programs where assignments are replaced by unstructured atomic actions. Section \ref{sec:relational} introduces a relational semantics for weighted regular programs using weighted transition systems where weights are elements of an arbitrary semiring. Section \ref{sec:KA} recalls the basics of Kleene algebra and Section \ref{sec:KAWT} introduces an expansion of Kleene algebra with tests that generalizes the relational semantics of weighted regular programs. The expansion is called \emph{Kleene algebra with weights and tests}, and it adds a semiring subalgebra (representing weights) to Kleene algebra with tests. Section \ref{sec:using} demonstrates by means of an example how Kleene algebra with weights and tests can be used to reason about equivalence and optimal runs of regular while programs. 

\section{Weighted programs}\label{sec:weighted}

It is well-known that while programs can be represented using the syntax of \emph{regular programs} \cite{Pratt1976,FischerLadner1979}. The latter contains nondeterministic branching (or choice) as one of the basic operations, so it is natural to extend regular programs with weights to capture weighted programs. For the sake of simplicity, we will work with a propositional abstraction of weighted programs where assignments $\mathtt{x} := E$ are replaced by unstructured ``atomic programs'' in the style of \cite{FischerLadner1979,Kozen1997}. We will see that even in this simpler setting we are able to carry out a significant amount of reasoning about weighted programs. 

\begin{definition}
A \emph{signature} is $\Sigma = (\mathsf{P}, \mathsf{B}, \mathsf{F})$, a triple of disjoint sets of variables, intuitively representing atomic programs, Boolean tests, and weighted tests, respectively. The set of \emph{weighted regular programs over $\Sigma$}, $\mathsf{WRP}_{\Sigma}$, consists of expressions of the following sorts:
\begin{center}
\begin{tabular}{lrl}
Boolean expressions \quad & $b,c :=$ & $\mathtt{b} \mid \mathtt{0} \mid \mathtt{1} \mid b \cdot c \mid b + c \mid \neg b$\\
Weightings & $f, g := $ & $\mathtt{f} \mid \mathtt{0} \mid \mathtt{1} \mid f \cdot g \mid f + g$\\
Programs & $p, q :=$ & $\mathtt{p} \mid b \mid f \mid p + q \mid p \cdot q \mid p^{*}$
\end{tabular}
\end{center}
\end{definition}
\noindent (It is assumed that $\mathtt{b} \in \mathsf{B}$, $\mathtt{f} \in \mathsf{F}$, and $\mathtt{p} \in \mathsf{P}$.) We will sometimes write $\bar{b}$ instead of $\neg b$ and $pq$ instead of $p \cdot q$. 

The language of weighted regular programs extends the language of Kleene algebra with tests \cite{Kozen1997} corresponding to regular programs by adding  the sort of ``weightings'', that is, semiring terms representing assignments of weights to computation paths. Weights are represented by elements of an abstract semiring \cite{DrosteEtAl2009,KuichSalomaa1986}. Recall how the language of \textsf{KAT}  expresses standard control flow commands of while programs:
\begin{itemize}
\item \textbf{skip} $:= \mathtt{1}$ and \textbf{abort} $:= \mathtt{0}$
 \item sequential composition $p \,;\, q := p q$
 \item \textbf{if} $b$ \textbf{then} $p$ \textbf{else} $q$ $:= (bp) + (\neg bq)$
 \item \textbf{while} $b$ \textbf{do} $p$ $:= (bp)^{*}\neg b$ 
 \end{itemize}
Note that nondeterministic branching (choice) is represented in the language of regular programs by the $+$ operator.
 
 \begin{example}\label{exam:ski}
 Recall the ``ski rental program'', the main motivating example of a weighted program in \cite{BatzEtAl2022}:\footnote{The ski rental problem is an optimization problem regarding a situation where one goes for a skying trip for $n$ days and has the choice of renting a pair of skis for $1$ Euro per day vs.~buying a pair of skis for $y$ Euros.}

\begin{figure}[h]
\begin{algorithmic}[1]
\WHILE{$n > 0$}
\STATE $n := n-1$ $;$
\STATE $\big \{$ $ \odot 1 $
\STATE $\oplus$
\STATE $ \odot y \, ; \, n := 0 $ $\big \}$
\ENDWHILE
\end{algorithmic}
\end{figure}
 
 The operator $\oplus$ in line 4 expresses nondeterministic branching between computation sequences, one of which executes the ``weighting'' $\odot 1$, or ``add one unit of weight'' and returns to the beginning of the while loop, and the other executes $\odot y$, or ``add $y$ units of weight'', then assigns $n := 0$, and then returns to the beginning of the loop, after which the Boolean test $n > 0$ is evaluated as False and the computation halts.
 
The structure of the ski rental program can be represented by the following weighted regular program (for the sake of readability, we use descriptive variable names, bold font for weighting variables, and we enclose Boolean expressions in curly brackets):
\begin{equation}\label{eq:ski}
\Big( \{ \texttt{neq0} \} \big( \texttt{sub1} (\one + \skis \, \texttt{end}) \big) \Big)^{*}\, \{ \neg \texttt{neq0} \} \, .
\end{equation}
In this program, $\texttt{neq0}  \in \mathsf{B}$ represents the test $n > 0$, $\texttt{sub1} \in \mathsf{P}$ represents the instruction $n := n - 1$ to subtract $1$ from the value of $n$, $\texttt{end}  \in \mathsf{P}$ represents the assignment $n := 0$ that ends the loop, and $\one, \skis \in \mathsf{F}$ represent the weightings $\odot 1$ (adding one unit of weight) and $\odot y$ (adding the price of the skis), respectively.  
 \end{example}
 

\section{Relational semantics}\label{sec:relational}

In this section we introduce relational semantics for weighted regular programs, based on semiring-valued transition systems.

\begin{definition}
A \emph{semiring} is $\mathbf{S} = (S, +, \cdot, 1, 0)$ where
\begin{itemize}
\item $(S, +, 0)$ is a commutative monoid;
\item $(S, \cdot, 1)$ is a monoid;
\item $x \cdot (y + z) = (x \cdot y) + (x \cdot z)$ and $(x + y) \cdot z = (x \cdot z) + (y \cdot z)$;
\item $0 \cdot x = 0 = x \cdot 0$.
\end{itemize}
A semiring is \emph{idempotent} iff $x + x = x$ for all $x \in S$. A semiring is \emph{complete} iff $(S, +, 0)$ is a complete monoid and the following distributivity laws hold:
\begin{equation*}
\sum_{i \in I} (x \cdot x_i) = x \cdot \left ( \sum_{i \in I} x_i \right ) \qquad
\sum_{i \in I} (x_i \cdot x) =  \left ( \sum_{i \in I} x_i \right ) \cdot x
\end{equation*}
\end{definition}
(Complete idempotent semirings are also known as \emph{quantales}.) We define the \emph{natural order} $\preceq$ on a semiring as follows:
$$x \preceq y \iff \exists z (x + z = y) \, .$$
We note that in idempotent semirings the natural order coincides with the semilattice order $\leq$ defined by $x \leq y \iff x + y = y$.

\begin{example}
(1) An example of a complete idempotent semiring that is well known especially from shortest path algorithms is the \emph{tropical semiring} over extended natural numbers
\begin{equation*}
\mathbf{T} = (\mathbb{N}^{\infty}, \mathrm{min}, +, 0^{\mathbb{N}}, \infty) 
\end{equation*}
where
\begin{itemize}
 \item $\mathbb{N}^{\infty} = \mathbb{N} \cup \{ \infty \}$, where $\infty \notin \mathbb{N}$;
 \item $\mathrm{min}$ is the minimum operation extended to $\mathbb{N}^{\infty}$ by defining $\mathrm{min}(n, \infty) = \mathrm{min}(\infty, n) = n$ for all $n$; $\mathrm{min}$ is seen as semiring addition (hence, $\infty$ is the minimal element in the ordering $\leq$ defined by $x \leq y$ iff $\mathrm{min}(x,y) = y$);
 \item $+$, representing semiring multiplication, is addition on $\mathbb{N}$  and $n + \infty = \infty = \infty + n$ for all $n$ (hence, $\infty$ is the annihilator element);
 \item $0^{\mathbb{N}}$, the natural number zero, is the neutral element with respect to $+$.
 \end{itemize}
 
 (2) Another example well-known from the theory of fuzzy logic is the \emph{{\L}ukasiewicz semiring} $\textbf{{\L}} = ([0,1], \mathrm{max}, \otimes, 1, 0)$ where $[0,1]$ is the real unit interval and $\otimes$ is the {\L}ukasiewicz t-norm $x \otimes y = \mathrm{max}\{ 0, x + y - 1 \}$.
\end{example}

Intuitively, semirings can be seen as representing a set of \emph{weights} together with an operation of weight addition (semiring multiplication $\cdot$), a weight comparison relation $\preceq$ induced by semiring addition $+$, and two designated weights, namely, ``no weight'' $1$ and the ``absolute'' weight $0$.

\begin{definition}
Let $\mathbf{S}$ be a semiring. An \emph{$\mathbf{S}^\Sigma$-transition system} is $(X, L)$ where $X \neq \emptyset$ and $L$ is a function such that
\begin{center}
$L (\mathtt{p}) \subseteq X \times X$ \qquad
$L (\mathtt{b}) \subseteq X$\qquad
$L(\mathtt{f}) \in \mathbf{S} \, .$
\end{center}
\end{definition}
\noindent Intuitively, $(s,t) \in L(\mathtt{p})$ means that the atomic program $\mathtt{p}$ may terminate in state $t$ when executed in state $s$; $s \in L(\mathtt{b})$ means that the atomic Boolean expression $\mathtt{b}$ is evaluated to True in $s$; and $L(\mathtt{f}) \in \mathbf{S}$ is the weight assigned to the atomic weighting expression $\mathtt{f}$.

Equivalently, $L$ in an $\mathbf{S}^{\Sigma}$-transition system can be seen as a function from $\bigcup \Sigma$ to $\mathbf{S}^{X \times X}$ such that
\begin{itemize}
\item $L(\mathtt{p}), L(\mathtt{b}) \in \{ 0, 1 \}^{X \times X}$;
\item $L(\mathtt{b})(s,t) = 0$ and $L(\mathtt{f})(s,t) = 0$ if $s \neq t$;
\item $L(\mathtt{f})(s,t) = L(\mathtt{f})(s', t')$ for all $s,s',t,t' \in X$.
\end{itemize}
\noindent (In effect, $L(\mathtt{b})$ and $L(\mathtt{f})$ are functions from the identity relation on $S$ which in turn represents $S$.) 

In a given $\mathbf{S}^{\Sigma}$-semiring, $L$ can be extended to a function $\mathsf{WRP}_{\Sigma} \to \mathbf{2}^{X \times X}$ specifying the interpretation of each weighted regular program by lifting the weighted regular program operations to the set of functions $X \times X \to \mathbf{S}$ :
\begin{definition}
Let $\lambda, \lambda' \in \mathbf{S}^{X \times X}$ and let $\theta \in \{ 0, 1 \}^{X \times X}$  
\begin{itemize}
\item $\mathbf{1} = \mathrm{id}_X$ (the identity relation on $X$);
\item $\mathbf{0}(s,t) = 0$;
\item $(\lambda + \lambda')(s,t) = \lambda (s,t) + \lambda' (s,t)$;
\item $(\lambda \cdot \lambda')(s,t) = \sum \{ \lambda (s,u) \cdot \lambda' (u,t) \mid u \in X \}$;
\item $(\neg \theta) (s,t) = 
		\begin{cases}
		1 & \text{if } s = t \text{ and } \theta (s,t) = 0\\
		0 & \text{otherwise;}
		\end{cases}
		$
\item $\lambda^{0} = \mathbf{1}$ and $\lambda^{n+1} = \lambda^{n} \cdot \lambda$.
\end{itemize}
If $\mathbf{S}$ is a complete semiring, then we define:
\begin{itemize}
\item $\lambda^{*} = \sum_{n \geq 0} \lambda^{n}$.		
\end{itemize}
\end{definition}

It is easily checked that $(\mathbf{S}^{X \times X}, \cdot, +, \mathbf{1}, \mathbf{0})$ is a semiring; see Lemma \ref{lem:lifting} below. We denote $\{ 0, 1 \}^{X \times X}$ as $2^{X \times X}$.

\section{Kleene algebra}\label{sec:KA}

Kleene algebras \cite{Kozen1990,Kozen1994} are structures that arise naturally in the study of regular languages, finite automata and shortest path algorithms, for instance. Kleene algebra offers an elegant  framework for equational reasoning about regular programs \cite{Kozen1997}. In this section we recall the basic preliminaries on Kleene algebra.

\begin{definition}
A \emph{Kleene algebra} is an idempotent semiring with a unary operation $^{*}$ that satisfies the following (quasi)equations:
\begin{gather*}
1 + xx^{*} \leq x^{*} \qquad
1 + x^{*} x \leq x^{*}\\
y + xz \leq z \to x^{*} y \leq z \qquad
y + zx \leq y \to y x^{*} \leq z
\end{gather*}
A Kleene algebra is \emph{$^{*}$-continuous} iff it satisfies
\begin{equation*}
x y^{*} z = \sum_{n \geq 0} x y^{n} z \, .
\end{equation*}
\end{definition}

\begin{example}
(1) The Kleene algebra of binary relations on a non-empty set $X$ is
\begin{equation*}
\mathbf{Rel}_{X} = (\mathbf{2}^{X \times X}, \cup, \circ, \,^{*}, \mathrm{id}_X, \emptyset) \, ,
\end{equation*}
where $\mathbf{2}^{X \times X}$ is the set of all binary relations on $X$, $\cup$ is union, $\circ$ is relational composition, $\,^{*}$ is reflexive transitive closure, and $\mathrm{id}_X$ is the identity relation on $X$.

(2) Take a finite set of symbols $\Delta$ and let $\Delta^{*}$ be the set of finite sequences over $\Delta$, including the empty sequence $\epsilon$. Let $\mathrm{Reg}_{\Delta}$ be the smallest subset of $\mathbf{2}^{\Delta^{*}}$  (i.e.~the set of all languages over $\Delta$) that contains the empty set $\emptyset$, the set $\{ \epsilon \}$, and $\{ \mathtt{x} \}$ for all $\mathtt{x} \in \Delta$ that is closed under finite unions and the following operations:
\begin{itemize}
\item $X \cdot Y = \{ xy \mid x \in X \And y \in Y \}$, i.e.~the set of concatenations of strings from $X$ with strings from $Y$;
\item $X^{*} = \bigcup_{n \geq 0} X^{n}$, where $X^{0} = \{ \epsilon \}$ and $X^{n+1} = X^{n} \cdot X$, i.e.~the set of all strings that can be parsed as concatenations of a finite number of strings in $X$ (Kleene iteration).
\end{itemize}
The set $\mathrm{Reg}_{\Delta}$ is called the set of regular languages over $\Delta$. The Kleene algebra of regular languages over $\Delta$ is
\begin{equation*}
\mathbf{Reg}_{\Delta} = (\mathrm{Reg}_{\Delta}, \cup, \cdot, \,^{*}, \{ \epsilon \}, \emptyset ) \, .
\end{equation*}
\end{example}

\begin{definition}
A \emph{Kleene algebra with tests} \cite{Kozen1997} is a structure of the form
 $$\mathbf{K} = (K, B, \cdot, +, \,^{*}, \,^{-}, 1, 0)$$ where
 \begin{itemize}
 \item $(K, \cdot, +, \,^{*}, 1, 0)$ is a Kleene algebra;
 \item $B \subseteq K$, and $(B, \cdot, +, \,^{-}, 1, 0)$ is a Boolean algebra.
 \end{itemize}
A Kleene algebra with tests is $^{*}$-continuous iff its underlying Kleene algebra is $^{*}$-continuous.
\end{definition}

In a Kleene algebra with tests, elements of $B$ are seen as ``tests'' of statements formulated in a Boolean language, while elements of $K$ in general represent ``structured actions''.

\begin{example}
(1) Every Kleene algebra is a Kleene algebra with tests. Take $B = \{ 1, 0 \}$ and define $\,^{-}$ as Boolean complementation on $B$.

(2) A relational Kleene algebra with tests over some set $X$ is the expansion of $\mathbf{Rel}_{X}$ with the set of subsets of $\mathrm{id}_X$ (seen as $B$), and the complementation operation on the set of subsets of $\mathrm{id}_X$ (seen as $\,^{-}$).

(3) Take two finite sets $\mathsf{A}$ and $\mathsf{T}$ of program and Boolean variables, respectively. We assume that $\mathsf{A} = \{ \mathtt{b}_1, \ldots, \mathtt{b}_n \}$ is ordered in some fixed but arbitrary way. An \emph{atom} over $\mathsf{A}$ is a sequence $\mathtt{b}_1^{\pm} \ldots \mathtt{b}_n^{\pm}$, where $\mathtt{b}_i^{\pm} \in \{ \mathtt{b}_i, \bar{\mathtt{b}}_i \}$. Let $1_\mathsf{A}$ be the set of all atoms over $\mathsf{A}$. A \emph{guarded string} over $\mathsf{A}, \mathsf{T}$ is any sequence of the form (for $k \geq 0$)
 \begin{equation*}
 A_0 p_1 A_1 p_2 A_2 \ldots p_k A_{k} 
 \end{equation*}
 where each $A_i$ is an atom over $\mathsf{A}$ and each $p_i \in \mathsf{T}$. Guarded strings can be seen as representations of execution traces of programs (abstract states are replaced by atoms). Let $2^{GS_{\mathsf{A}, \mathsf{T}}}$ be the set of all sets of guarded strings over $\mathsf{A}, \mathsf{T}$. Obviously $1_{\mathsf{A}} \subseteq 2^{GS_{\mathsf{A}, \mathsf{T}}}$. The \emph{coalesced product} operation $\diamond$ is a partial binary function on $GS_{\mathsf{A}, \mathsf{T}}$ defined as follows:
 \begin{equation*}
 x A \diamond A' y =
 \begin{cases}
 x A y & \text{if } A = A'\\
 \text{undefined} & \text{otherwise}.
 \end{cases}
 \end{equation*}
 The coalesced product operation is lifted to sets of guarded strings in an obvious way. Note that the coalesced product operation on sets of guarded strings is a total function. The \emph{algebra of guarded languages} over $\mathsf{A}, \mathsf{T}$ \cite{KozenSmith1997},
\begin{equation*}
 \mathbf{G}_{\mathsf{A}, \mathsf{T}} = (2^{GS_{\mathsf{A}, \mathsf{T}}}, 2^{1_\mathsf{A}}, \cup, \diamond, \,^{*}, \,^{-}, 1_\mathsf{A}, \emptyset) \, ,
 \end{equation*} 
  is a $^{*}$-continuous Kleene algebra with tests ($X^{*} = \bigcup_{ n \geq 0} X^{n}$, where exponentiation is defined using $\diamond$).
\end{example}

\begin{lemma}\label{lem:lifting}
Let $X$ be a non-empty set and $\mathbf{S}$ a semiring. Then
\begin{enumerate}
\item $\mathbf{S}(X) = (\mathbf{S}^{X \times X}, \cdot, +, \mathbf{1}, \mathbf{0})$ is a semiring;
\item if $\mathbf{S}$ is idempotent (complete), then $\mathbf{S}(X)$ is idempotent (complete);
\item if $\mathbf{S}$ is idempotent and complete, then $\mathbf{S^{*}}(X) = (\mathbf{S}(X), \,^{*})$ is a $^{*}$-continuous Kleene algebra;
\item if $\mathbf{S}$ is idempotent and complete, then $\mathbf{S^{*}_{2}}(X) = (\mathbf{S^{*}}(X), \mathbf{2}^{\mathrm{id}_X}, \neg )$ is a $^{*}$-continuous Kleene algebra with tests.
\end{enumerate}
\end{lemma}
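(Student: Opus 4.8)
The plan is to recognize $\mathbf{S}(X)$ as the matrix (convolution) semiring over $\mathbf{S}$ with index set $X$, and to prove each clause by reducing it to the corresponding property of $\mathbf{S}$ computed entrywise. For clause 1, I would first observe that $+$ and $\mathbf{0}$ are defined pointwise, so $(\mathbf{S}^{X\times X},+,\mathbf{0})$ inherits the commutative-monoid laws from $\mathbf{S}$ entry by entry. The multiplicative structure carries the content: I would check that $\mathbf{1}=\mathrm{id}_X$ is a two-sided unit (the defining sum collapses to the single term at $u=s$ or $u=t$ because $\mathbf{1}$ is nonzero only on the diagonal), that $\mathbf{0}\cdot\lambda=\mathbf{0}=\lambda\cdot\mathbf{0}$ follows from $0\cdot x=0=x\cdot 0$ in $\mathbf{S}$, and that both distributive laws hold by pushing the pointwise $+$ through the defining sum using distributivity in $\mathbf{S}$.

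The one genuinely computational step is associativity of $\cdot$: I would expand $((\lambda\cdot\mu)\cdot\nu)(s,t)$ and $(\lambda\cdot(\mu\cdot\nu))(s,t)$ as double sums over intermediate points $u,v\in X$ and show they coincide, using associativity of $\cdot$ together with associativity and commutativity of $+$ in $\mathbf{S}$ to distribute products over sums and reindex. For clause 2, idempotence is immediate and pointwise, $(\lambda+\lambda)(s,t)=\lambda(s,t)+\lambda(s,t)=\lambda(s,t)$. For completeness I would define infinitary sums entrywise, $(\sum_i\lambda_i)(s,t)=\sum_i\lambda_i(s,t)$, making $(\mathbf{S}^{X\times X},+,\mathbf{0})$ a complete monoid; the two infinitary distributive laws then follow from the corresponding laws in $\mathbf{S}$ after merging the defining sum over the intermediate point with the external index set, which is legitimate precisely because $\mathbf{S}$ is complete.

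Here I would flag one caveat: the defining sum for $\cdot$ ranges over all $u\in X$, so for infinite $X$ it is meaningful only when that sum exists, i.e.\ when $X$ is finite or $\mathbf{S}$ is complete; since clauses 3 and 4 assume completeness this causes no trouble there. For clause 3, clauses 1 and 2 already give that $\mathbf{S}(X)$ is a complete idempotent semiring, so $\lambda^{*}=\sum_{n\geq 0}\lambda^{n}$ is defined, and $^{*}$-continuity is immediate: $\lambda\mu^{*}\nu=\lambda(\sum_n\mu^{n})\nu=\sum_n\lambda\mu^{n}\nu$ by the infinitary distributive laws of clause 2. The Kleene (quasi)equations then follow, either by invoking the standard fact that a $^{*}$-continuous idempotent semiring is a Kleene algebra or by direct check, e.g.\ $\mathbf{1}+\lambda\lambda^{*}=\mathbf{1}+\sum_{n\geq 1}\lambda^{n}=\lambda^{*}$, and for the induction rule $\mu+\lambda\nu\leq\nu\Rightarrow\lambda^{*}\mu\leq\nu$ one shows $\lambda^{n}\mu\leq\nu$ for all $n$ by induction (using monotonicity of $\cdot$, which holds in any idempotent semiring) and takes the supremum, which coincides with $\sum$ in the complete idempotent setting.

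For clause 4 the Kleene-algebra part is inherited from clause 3, so it remains to verify that $(\mathbf{2}^{\mathrm{id}_X},\cdot,+,\neg,\mathbf{1},\mathbf{0})$ is a Boolean algebra. I would note that $\mathbf{2}^{\mathrm{id}_X}$ is exactly the set of diagonal $\{0,1\}$-valued matrices, in bijection with the powerset of $X$ via $Y\mapsto\theta_Y$, where $\theta_Y(s,s)=1$ iff $s\in Y$ and $\theta_Y$ vanishes off the diagonal. On these the lifted operations collapse to the Boolean ones: $+$ is join, $\cdot$ is meet (its defining sum reduces to the single term $\theta(s,s)\theta'(s,s)$, with $1\cdot 1=1$ in $\mathbf{S}$), and $\neg$ is complementation within the diagonal, with $\mathbf{1}$ and $\mathbf{0}$ as top and bottom; hence $B\cong\mathbf{2}^{X}$ is a Boolean algebra. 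I expect the main obstacle to be bookkeeping rather than conceptual: getting the associativity computation and the reindexing of the nested infinitary sums for the completeness distributive laws exactly right, since those are the only places where the axioms of $\mathbf{S}$ are genuinely consumed, while the remaining clauses reduce to entrywise checks.
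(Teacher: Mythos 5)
Your proof is correct, and the computations you carry out (pointwise monoid laws, collapse of the convolution sum against the diagonal unit, the double-sum associativity check, infinitary distributivity yielding $^{*}$-continuity, and the identification of $\mathbf{2}^{\mathrm{id}_X}$ with the powerset Boolean algebra of $X$) are exactly the manipulations the paper relies on. The difference is organizational rather than mathematical: the paper never proves Lemma~\ref{lem:lifting} directly --- it is dismissed as ``easily checked'' --- and the actual work appears in the appendix as the proof of Theorem~\ref{thm:Partial_to_KAWT}, which establishes the more general statement that for any partial semigroup with identity $\mathbf{P}$ and complete idempotent semiring $\mathbf{S}$, the function algebra $\mathbf{S^{P}}$ is a $^{*}$-continuous Kleene algebra with weights and tests. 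Clauses 3 and 4 of the lemma are then the instance $\mathbf{P} = \mathbf{Cart}_X$ of Example~\ref{exam:partial}(1), where the convolution over $\{(y,z) \in D \mid y \diamond z = x\}$ becomes precisely your sum over intermediate points $u \in X$, and the paper's set $B$ (the powerset of $I = \mathrm{id}_X$) becomes your diagonal $\{0,1\}$-valued matrices. Your direct matrix verification buys two things: concreteness, and coverage of clauses 1--2, which are not literally instances of Theorem~\ref{thm:Partial_to_KAWT} since that theorem assumes idempotence and completeness throughout, whereas clause 1 is asserted for an arbitrary semiring. On that last point your caveat is well taken and identifies something the paper glosses over: for infinite $X$ the convolution product $(\lambda \cdot \lambda')(s,t) = \sum\{\lambda(s,u)\cdot\lambda'(u,t) \mid u \in X\}$ is an infinite sum, so clauses 1--2 as stated only make sense when $X$ is finite or $\mathbf{S}$ admits the relevant sums; the paper's general construction avoids this by assuming completeness from the start, and your proof is cleaner for making the proviso explicit.
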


(We note that $\mathbf{S^{*}_{2}}(X)$ provides a natural example of a Kleene algebra with tests where the set of tests, that is $ \mathbf{2}^{\mathrm{id}_X}$, is not identical to the set of elements under $\mathbf{1} = \mathrm{id}_X$, where $\lambda \leq \lambda'$ iff $\lambda(s,t) \leq \lambda'(s,t)$ for all $s,t \in X \times X$.)

\begin{lemma}\label{lem:constant}
The subalgebra of $\mathbf{S}(X)$ consisting of all constant functions is isomorphic to $\mathbf{S}$.
\end{lemma}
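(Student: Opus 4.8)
The plan is to exhibit the isomorphism as an explicit embedding. For $c \in S$ let $\widehat{c} \in \mathbf{S}^{X \times X}$ be the constant function with $\widehat{c}(s,t) = c$ for all $s,t \in X$, and set $C = \{\widehat{c} \mid c \in S\}$. I would first record that $C$ is closed under the operations of the semiring $\mathbf{S}(X)$ (Lemma \ref{lem:lifting}), so that it is genuinely a subalgebra, and then show that $\eta \colon S \to C$, $\eta(c) = \widehat{c}$, is a semiring isomorphism. Closure under addition, together with the clauses $\eta(c+d) = \eta(c) + \eta(d)$ and $\eta(0) = \mathbf{0} = \widehat{0}$, is immediate because addition in $\mathbf{S}(X)$ is computed pointwise: $(\widehat{c} + \widehat{d})(s,t) = c + d$ for every $(s,t)$.

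The whole content of the lemma sits in the multiplicative clause. Unfolding the lifted product gives
\begin{equation*}
(\widehat{c} \cdot \widehat{d})(s,t) = \sum_{u \in X} \widehat{c}(s,u) \cdot \widehat{d}(u,t) = \sum_{u \in X} (c \cdot d),
\end{equation*}
which is again a constant function, so $C$ is at least closed under $\cdot$; the point needing care is that this constant equals $c \cdot d$, i.e. that $\sum_{u \in X}(c \cdot d) = c \cdot d$. I would isolate this as the key fact: for a non-empty index set $X$ the sum of copies of a single element $a$ collapses to $a$. If $\mathbf{S}$ is idempotent (which covers the semirings of interest here) this is exactly idempotency for finite $X$, and in a complete idempotent semiring it holds for arbitrary non-empty $X$ since $\sum$ is the supremum in the order $\leq$ and $\sup\{a\} = a$. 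Granting the collapse, $\widehat{c} \cdot \widehat{d} = \widehat{c \cdot d}$, so $\eta$ preserves products; note also that the multiplicative unit of $C$ is then $\widehat{1}$, which differs from the unit $\mathbf{1} = \mathrm{id}_X$ of the ambient algebra once $|X| > 1$, so ``subalgebra'' must here be read as a subsemiring carrying its own unit.

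It then remains only to see that $\eta$ is a bijection onto $C$: surjectivity holds by definition of $C$, and injectivity follows from $X \neq \emptyset$, since evaluating $\widehat{c} = \widehat{d}$ at any pair forces $c = d$. Collecting the clauses shows $\eta$ is an isomorphism $\mathbf{S} \cong (C, \cdot, +, \widehat{1}, \widehat{0})$. The only genuine obstacle is the sum-collapse $\sum_{u \in X}(c \cdot d) = c \cdot d$; every other clause is a one-line pointwise check. I would also flag the alternative, arguably intended, reading in which the representatives are taken to vanish off the diagonal, namely the functions equal to $c$ on $\mathrm{id}_X$ and to $0$ elsewhere: for these the sum defining the product has a single surviving term $u = s = t$, so the collapse is automatic for every semiring and the unit of the resulting subalgebra coincides with $\mathrm{id}_X = \mathbf{1}$.
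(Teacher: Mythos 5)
Your proposal is correct, but its emphasis is inverted relative to the paper: the reading you relegate to a closing remark --- representatives equal to $c$ on $\mathrm{id}_X$ and to $0$ off the diagonal --- is the intended one, and your main line (constant-everywhere functions) could not by itself prove the lemma as stated. The paper gives no separate proof of Lemma~\ref{lem:constant}; its content is the $\mathbf{Cart}_X$ instance of the claim $(W,\cdot,+,1,0)\cong\mathbf{S}$ established in the appendix proof of Theorem~\ref{thm:Partial_to_KAWT}, where $W$ is exactly the set of functions constant on the identity set $I$ and $0^{\mathbf{S}}$ elsewhere, the isomorphism is the evaluation map $\phi(\lambda)=\lambda(i)$ at a fixed $i\in I$ (the inverse of your $\eta$), and the multiplicative step, equation~(\ref{eq:constant}), is precisely your ``single surviving term'' computation. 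Section~\ref{sec:relational} points the same way when it describes $L(\mathtt{f})$ as, in effect, a function on the identity relation. Two features of your main line show why the literal reading cannot be what the lemma means: it needs the collapse $\sum_{u\in X}(c\cdot d)=c\cdot d$, hence idempotency (plus completeness when $X$ is infinite), although the lemma assumes only a semiring; and, as you yourself observe, the resulting $C$ has unit $\widehat{1}\neq\mathbf{1}$, so $C$ is not a subalgebra of $\mathbf{S}(X)$ in the signature with the constants $1,0$ at all --- which contradicts the lemma's wording rather than merely stretching it. What your analysis buys is a precise delimitation of when the constant-everywhere embedding also works (the idempotent complete case, which does cover the paper's quantale-valued examples); what the diagonal reading buys is the lemma in full generality, with a proof that is a one-line check for every semiring for which $\mathbf{S}(X)$ is defined, and with a subalgebra that genuinely contains $\mathbf{1}=\mathrm{id}_X$ and $\mathbf{0}$ --- which is what is needed when weightings are interpreted inside $\mathbf{S^{*}_{2}}(X)$.
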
 

\section{Kleene algebra with weights and tests}\label{sec:KAWT}
 
 Given the applications of Kleene algebra in reasoning about regular and while programs, it is natural to consider a Kleene-algebraic perspective on weighted programs. Gomes et al.~\cite{GomesEtAl2019} formulate a generalization of \textsf{KAT} called \emph{graded} \textsf{KAT} (or \textsf{GKAT}) where the Boolean algebra of tests is replaced by a more general algebraic structure. Batz et al.~\cite{BatzEtAl2022} point out that a deeper study of the applicability of \textsf{GKAT} to reasoning about weighted programs is an interesting problem to look at. Weighted programs in the sense of \cite{BatzEtAl2022} combine Boolean tests and weightings (weighted tests), and so it seems natural to consider a generalization of the \textsf{GKAT} approach.
 
 \begin{definition}
 A \emph{Kleene algebra with weights and tests} is a structure
 $$\mathbf{K} = (K, B, S, \cdot, +, \,^{*}, \,^{-}, 1, 0)$$ where
 \begin{itemize}
 \item $(K, \cdot, +, \,^{*}, 1, 0)$ is a Kleene algebra;
 \item $B \subseteq K$ and $S \subseteq K$;
 \item $(B, \cdot, +, \,^{-}, 1, 0)$ is a Boolean algebra;
 \item $(S, \cdot, +, 1, 0)$ is a semiring.
 \end{itemize}
 A \emph{valuation in $\mathbf{K}$} is any homomorphism $v$ from $\mathsf{WRP}_{\Sigma}$ to $\mathbf{K}$ such that $v(\mathtt{b}) \in B$ and $v(\mathtt{f}) \in S$. Two programs $p, q \in \mathsf{WRP}_{\Sigma}$ are \emph{equivalent} in a class $\mathcal{K}$ of Kleene algebras with weights and tests iff $v(p) = v(q)$ for all valuations $v$ in all $\mathbf{K} \in \mathcal{K}$ .
 \end{definition}
 
 Clearly each KAT is a KAWT; just take any subuniverse $S$ of $K$. However, there are more interesting examples.
 
 \begin{example}

(1) Our first example combines a Kleene algebra of guarded languages with the tropical semiring. Take the set of guarded strings over some fixed $\mathsf{A}$ and $\mathsf{T}$, and denote it as $GS$. A function $\lambda: GS \to \mathbb{N}^{\infty}$, assigning a weight to each guarded string, can be seen as specifying the weights of execution traces of a program (if $\lambda(s) = \infty$, then either $s$ is not a trace that can be generated by the program corresponding to $\lambda$ or the trace carries an ``absolute weight''). Note that functions $\lambda: GS \to \mathbb{N}^{\infty}$ generalize sets of guarded strings ($GS \to \{ 0^{\mathbb{N}}, \infty \}$), i.e.~elements of the Kleene algebra $\mathbf{G}$. Take
\begin{equation*}
\mathbf{T^{G}} = ((\mathbb{N}^{\infty})^{GS}, B, S, \cdot, +, \,^{*}, \,^{-}, 1, 0)
\end{equation*}
where $(\mathbb{N}^{\infty})^{GS}$ is the set of all functions from $GS$ to $\mathbb{N}^{\infty}$ and
\begin{itemize}
\item $1 (s) =
\begin{cases}
0^{\mathbb{N}} & \text{if } s \in 1_{\mathsf{A}}\\
\infty & \text{otherwise;}
\end{cases}
$

\item $0 (s) = \infty$ for all $s$;

\item $\lambda \in B$ iff $\lambda(s) \in \{ 0^{\mathbb{N}}, \infty \}$ for all $s \in GT$ and $\lambda(s) = 0^{\mathbb{N}}$ only if $s \in 1_{\mathsf{A}}$; 

\item $S = \{ \lambda \mid \lambda(s) = \infty$ if $s \not\in 1_{\mathsf{A}}$ and $\lambda (A_i) = \lambda(A_j)$ for all $A_i, A_j \in 1_{\mathsf{A}} \}$;

\item $(\lambda \cdot \lambda')(s) = \mathrm{min} \{ \lambda (t) +^{\mathbb{N}^{\infty}} \lambda'(u) \mid s = t \diamond u \}$;

\item $(\lambda + \lambda')(s) = \mathrm{min} \{ \lambda(s), \lambda'(s) \}$;

\item $(\lambda^{*})(s) = \mathrm{min}_{n \geq 0} \big ( \lambda^{n}(s) \big )$ where $\lambda^{n}$ is defined using $\cdot\, ;$ 

\item $\,^{-} : B \to B$ such that $(\bar{\lambda})(s) = \infty$ iff $\lambda(s) \neq \infty$.
\end{itemize}
It can be shown that $\mathbf{T^{G}}$ is a ($^{*}$-continuous) Kleene algebra with weights and tests where $(S, \cdot, +, 1, 0) \cong \mathbf{T}$. $\mathbf{T^{G}}$ also contains a (proper) subalgebra isomorphic to $\mathbf{G}$, namely, $( \{ 0^{\mathbb{N}}, \infty \}^{G}, B, \cdot, +, \,^{*}, \,^{-}, 1, 0 )$.

(2) A similar construction can be carried out using $\mathbf{T}$ and $\mathbf{Rel}_X$. We leave the details to the reader, noting just that the coalesced product operation $\diamond$ can be defined on $X \times X$ by stipulating that $(s,t) \diamond (u,v) $ is $ (s,v)$ if $t = u$ and is undefined otherwise.
 \end{example}
 
 The construction encountered in the previous example can obviously be carried out in a more general setting. In fact, we can use a slight generalization of the well known notion of \emph{formal power series} coming up for instance in the study of weighted automata \cite{DrosteEtAl2009,KuichSalomaa1986}. Formal power series are functions from a monoid to a semiring; we will build on functions from certain partial semigroups.
 
%
%

\begin{definition}
A \emph{partial semigroup with identity} is $(G, D, \diamond, I)$ where $G \neq \emptyset$, $D$ is a binary relation on $G$, $\diamond : D \to G$, and $I \subseteq G$ such that
\begin{itemize}
\item $D(x,y) \And D(x \diamond y, z)$ iff $D(y,z) \And D(x, y \diamond z)$;
\item $D(x,y) \And D(x \diamond y, z)$ only if $(x \diamond y) \diamond z = x \diamond (y \diamond z)$;
\item for all $x$ there is $y$: $I(y)$ and $D(x,y)$;
\item for all $x$ there is $y$: $I(y)$ and $D(y,x)$;
\item $D(x,y)$ and $I(y)$ only if $x = x \diamond y$;
\item $D(y,x)$ and $I(y)$ only if $x = y \diamond x$.
\end{itemize}
\end{definition}
We note that the first two conditions above are stronger than the standard definition of a partial semigroup \cite{GudderSchelp1970}. We'll often write $xy$ instead of $x \diamond y$.
 
 \begin{example}\label{exam:partial}
 (1) An example of a partial semigroup with identity is $ \mathbf{Cart}_X = (X \times X, D, \diamond, \mathrm{id}_X)$ where $X \times X$ is the Cartesian product on a non-empty set $X$, $((x_1, y_1), (x_2, y_2)) \in D$ iff $y_1 = x_2$, $(x, y) \diamond (y, z) = (x,z)$, and $\mathrm{id}_X = \{ (x,x) \mid x \in X \}$ is the identity relation on $X$. 
 
 (2) $ \mathbf{Gu}_{\mathsf{A, T}} = (GS, D, \diamond, 1_{\mathsf{A}})$ where $GS$ is the set of guarded strings over some $\mathsf{A}$ and $\mathsf{T}$, $\diamond$ is the coalesced product operation, $D$ is the set of pairs $(s,t)$ such that $s \diamond t$ is defined, and $1_{\mathsf{A}}$ is the set of atoms over $\mathsf{A}$. 
 
 (3) $ \mathbf{Str}_{\Delta} = (\Delta^{*}, D, \,^{\frown}, \epsilon)$ where $\Delta^{*}$ is the set of finite sequences (strings) over a finite alphabet $\Delta$ (including the empty string $\epsilon$), $D$ is the universal relation on $\Delta^{*}$ and $\,^{\frown}$ is the concatenation operation (which is actually a total operation on $\Delta^{*}$). 
 \end{example}

Many well-known examples of Kleene algebras (with tests) are in fact algebras of functions from specific (regular) partial semigroups to the two-element Boolean semiring $\mathbf{Bo} = (\{ 1, 0 \}, \land, \lor, 1, 0)$; the reader can easily verify this by considering some of the previous examples.


\begin{definition}
Let $\mathbf{P} = (G, D, \diamond, I)$ be a partial semigroup with identity, and let $\mathbf{S} = (S, \cdot^{\mathbf{S}}, +^{\mathbf{S}}, 1^{\mathbf{S}}, 0^{\mathbf{S}})$ be a complete idempotent semiring. We define
\begin{equation*}
\mathbf{S^{P}} = (S^{G}, B, W, \cdot, +, \,^{*}, \,^{-}, 1, 0 )
\end{equation*} where $S^{G}$ is the set of all functions form $G$ to $S$ and
\begin{itemize}
\item $1 (x) =
\begin{cases}
1^{\mathbf{S}} & \text{if } x \in I\\
0^{\mathbf{S}} & \text{otherwise;}
\end{cases}
$

\item $0 (x) = 0^{\mathbf{S}}$ for all $s$;

\item $\lambda \in B$ iff $\lambda(x) \in \{ 1^{\mathbf{S}}, 0^{\mathbf{S}} \}$ for all $x \in G$ and $\lambda(x) = 1^{\mathbf{S}}$ only if $x \in I$; 

\item $W = \{ \lambda \mid \lambda(x) = 0^{\mathbf{S}}$ if $x \not\in I$ and $\lambda (y) = \lambda(z)$ for all $y, z \in I \}$;

\item $(\lambda \cdot \lambda')(x) = \sum \{ \lambda (y) \cdot^{\mathbf{S}} \lambda'(z) \mid (y,z) \in D \And y \diamond z = x \}$ (the sum uses $+^{\mathbf{S}}$);

\item $(\lambda + \lambda')(x) = \lambda(x) +^{\mathbf{S}} \lambda'(x)$;

\item $(\lambda^{*})(x) = \sum_{n \geq 0} \big ( \lambda^{n}(x) \big )$  where $\lambda^{n}$ is defined using $\cdot\, ;$ 

\item $\,^{-} : B \to B$ such that $(\bar{\lambda})(x) = 1^{\mathbf{S}}$ iff $\lambda(x) \neq 1^{\mathbf{S}}$.
\end{itemize}
\end{definition}

\begin{theorem}\label{thm:Partial_to_KAWT}
If $\mathbf{P}$ is a partial semigroup with identity and $\mathbf{S}$ is a complete idempotent semiring, then $\mathbf{S^P}$ is a $^{*}$-continuous Kleene algebra with weights and tests where $$(W, \cdot, +, 1, 0) \cong \mathbf{S} \, .$$
\end{theorem}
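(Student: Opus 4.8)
The plan is to first show that $(S^{G}, \cdot, +, 1, 0)$ is a complete idempotent semiring and then to obtain the $^{*}$-continuous Kleene algebra structure exactly as in Lemma~\ref{lem:lifting}: once completeness and idempotency are in hand, defining $\lambda^{*} = \sum_{n \geq 0} \lambda^{n}$ makes the $^{*}$-continuity equation $\mu \lambda^{*} \nu = \sum_{n \geq 0} \mu \lambda^{n} \nu$ an instance of the completeness distributivity laws, and $^{*}$-continuity then yields the unfold inequalities and the induction quasi-equations by the standard argument for $^{*}$-continuous idempotent semirings. The commutative monoid $(S^{G}, +, 0)$, its completeness, and its idempotency are inherited pointwise from $\mathbf{S}$, since $+$, infinite sums, and $0$ are all coordinatewise; the annihilation laws $0 \cdot \lambda = 0 = \lambda \cdot 0$ hold because $0^{\mathbf{S}}$ annihilates in $\mathbf{S}$; and each distributivity law, after expanding $\cdot$ as a sum and pulling the sums through using the completeness distributivity of $\mathbf{S}$, reduces to coordinatewise distributivity.

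The crux is that $(S^{G}, \cdot, 1)$ is a monoid, and I expect associativity to be the main obstacle. I would expand both $((\lambda \mu) \nu)(w)$ and $(\lambda (\mu \nu))(w)$, using the completeness distributivity laws to merge the nested sums into a single sum of triple products $\lambda(a) \cdot^{\mathbf{S}} \mu(b) \cdot^{\mathbf{S}} \nu(z)$. The first expansion ranges over triples with $(a,b) \in D$, $(a \diamond b, z) \in D$ and $(a \diamond b) \diamond z = w$, the second over triples with $(b,z) \in D$, $(a, b \diamond z) \in D$ and $a \diamond (b \diamond z) = w$. The first two axioms of a partial semigroup with identity say exactly that these index sets coincide and that on them the two bracketings of $a \diamond b \diamond z$ agree, so the sums are equal term by term; this is the one place where the \emph{strengthened} associativity axioms (flagged in the definition as stronger than usual) together with completeness are both indispensable. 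For the unit laws, in $(\lambda \cdot 1)(x)$ only summands with $z \in I$ survive, and the axiom that $D(y,z)$ with $I(z)$ forces $y = y \diamond z$ pins every such $y$ to $x$; the existence axiom supplies at least one such summand and idempotency collapses repetitions, yielding $\lambda(x)$. The law $1 \cdot \lambda = \lambda$ is symmetric via the two left-handed axioms.

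It then remains to check the weights-and-tests data. For $B$, a short computation with the identity-absorption axioms shows $B$ is closed under $\cdot$ and $+$ (both staying $\{1^{\mathbf{S}}, 0^{\mathbf{S}}\}$-valued and supported on $I$) and under $^{-}$ read as complementation relative to $I$; the map taking $\lambda \in B$ to its support in $I$ is then a Boolean isomorphism onto the powerset of $I$. For the final claim, I would define $e \colon W \to S$ sending $\lambda$ to its common value on $I$, which is well defined by the very definition of $W$, and verify it is a bijective semiring homomorphism. Preservation of $+$, $1$ and $0$ is immediate; the product is the interesting case: a summand of $(\lambda \cdot \lambda')(x)$ is nonzero only when its indices lie in $I$, and the axioms then force $y = z = x \in I$ with $(x,x) \in D$ and $x \diamond x = x$ (the latter obtained by applying the existence and absorption axioms to $x \in I$), so $\lambda \cdot \lambda'$ is again constant $e(\lambda) \cdot^{\mathbf{S}} e(\lambda')$ on $I$ and $0^{\mathbf{S}}$ off it. Since a member of $W$ is determined by its value on $I$ and every element of $S$ is attained, $e$ is bijective, giving $(W, \cdot, +, 1, 0) \cong \mathbf{S}$.
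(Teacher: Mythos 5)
Your proposal is correct, and for most of the theorem it tracks the paper's own proof: associativity of $\cdot$ via the two strengthened partial-semigroup axioms (merging nested sums into sums of triple products and matching the two index sets), the unit laws via the identity axioms, $B$ identified with the powerset of $I$, and the isomorphism $(W,\cdot,+,1,0) \cong \mathbf{S}$ by evaluating at a point of $I$, using the same two facts (Lemma \ref{lem:psi} in the paper) that a nonzero summand of $(\lambda \cdot \lambda')(x)$ forces $y = z = x \in I$ with $D(x,x)$ and $xx = x$. Where you genuinely diverge is the Kleene-star part. The paper never establishes that $\mathbf{S^P}$ is a \emph{complete} semiring; it proves the $^{*}$-continuity identity $\delta\lambda^{*}\theta = \sum_{n \geq 0}\delta\lambda^{n}\theta$ directly, by a four-index computation that invokes completeness of $\mathbf{S}$ (to pull $\sum_{n}$ out) and the strengthened associativity axioms a second time (to regroup $y(uv)$ as $(yu)v$), and then notes that the star (quasi)equations follow. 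You instead upgrade the semiring verification to a complete idempotent semiring---pointwise infinite sums plus the infinite distributivity laws for $\cdot$---and then obtain $\lambda^{*} = \sum_{n \geq 0}\lambda^{n}$, $^{*}$-continuity, and the star axioms formally, exactly along the lines of Lemma \ref{lem:lifting}. Both routes work. Yours factors the argument more cleanly: the partial-semigroup combinatorics is confined to the binary product, the infinite-sum bookkeeping is done once and abstractly, and the paper's separate $\delta\lambda^{*}\theta$ computation becomes a corollary; the price is that you must actually verify the infinite distributivity laws in $\mathbf{S^P}$ (interchanging the sum over decompositions $x = yz$ with the sum over the index family), a step you dispatch in a single clause and which is essentially the same manipulation the paper performs inside its direct computation, so nothing is saved there---only relocated. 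One small point in your favor: you say explicitly that idempotency is what collapses the repeated summands $\lambda(x)$ arising from distinct identity elements $z$ in the unit law, a detail the paper leaves implicit in its notation for sums over sets.
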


Given the relation of Kleene algebra to the algebra of regular languages \cite{Kozen1994}, and the relation of Kleene algebra with tests to the algebra of regular guarded languages \cite{KozenSmith1997}, it is natural to ask if, for any given class of semirings $\mathcal{S}$ (closed under isomorphisms) there is a Kleene algebra with weights and tests whose elements are sets of strings over some alphabet (a ``language-theoretic'' algebra, to use the terminology of \cite{KozenSmith1997}) which is free in the class of Kleene algebras with weights and tests whose semiring of weights belongs to $\mathcal{S}$. (That is, if the equational theory of the given class of Kleene algebras with weights and tests is complete with respect to a class language-theoretic algebras.) 

We will not answer this question here, but we will note that $\mathbf{T^{G}}$ over any $\mathsf{A}, \mathsf{T}$ (the algebra of functions from guarded strings over $\mathsf{A}, \mathsf{T}$ to extended natural numbers) is isomorphic to a language-theoretic algebra. 

Let $\Phi = (\mathsf{A}, \mathsf{T})$ where $\mathsf{A} \subseteq \mathsf{B}$ is a finite set of Boolean variables and $\mathsf{T} \subseteq \mathsf{P}$ is a finite set of program variables. We assume that $\mathsf{A} = \{ \mathtt{b}_1, \ldots, \mathtt{b}_n \}$ is ordered in some fixed but arbitrary way. Let $\Box$ be a new symbol (i.e.~$\Box \notin \bigcup \Sigma$). A \emph{weighted guarded string} over $\Phi$ is a string of the form $$A_0 p_1 A_1 \ldots p_n A_n \Box^{m} \, ,$$ where all $A_i$ are atoms over $\mathsf{A}$ and all $p_i \in \mathsf{T}$. That is, weighted guarded strings over $\Phi$ are a guarded strings over $\Phi$ followed by a string of $m$ copies of $\Box$. We'll write just $s(m)$, where $s \in GS$ and $m \in \mathbb{N}$. A weighted atom is $s(m)$ where $s \in 1_{\mathsf{A}}$.

Intuitively, a guarded string represents an execution trace of a program and the weight $\Box^{m}$ represents the weight of the trace. Hence, (some) sets of weighted guarded strings can be seen as representing the possible execution traces of weighted regular programs. If a set $X$ represents a program and $s(m) \notin X$ for all $m$, then $s$ is not a possible execution trace of the program, i.e.~it has ``infinite weight''.

A set $X$ of weighted guarded strings is \emph{unambiguous} iff $s(n) \in X$ and $s(m) \in X$ only if $n = m$; moreover, $X$ is \emph{crisp} iff $s(m) \in X$ only if $m = 0$, and $X$ is \emph{uniform} only if  $s(n), t(m) \in X$ only if $n = m$. A set $X$ of weighted atoms is \emph{universal} iff for all $A$ there is $n$ such that $A(n) \in X$. Note that a crisp set of weighted guarded strings is just a set of guarded strings, and that uniform sets are unambiguous.

The \emph{unambiguous union} of two sets of weighted guarded strings $X, Y$ is
\begin{equation*}
X \Cup Y := \{ s(\mathrm{min}(U)) \mid U = \{ n \mid s(n) \in X \cup Y \} \} \, .
\end{equation*}
Note that $\Cup$ is not necessarily idempotent, but it is a semilattice join operation on the set of all unambiguous sets of weighted guarded strings. We denote $X \Cup X$ as $X^{\Cup}$.

Coalesced product of weighted guarded strings is defined as follows:
\begin{equation*}
xA(n) \diamond A'y(m) =
\begin{cases}
xAy(n+m) & \text{if } A = A'\\
\text{undefined} & \text{otherwise.}
\end{cases}
\end{equation*}
 
\begin{definition}
Fix a $\Phi$. Let $$ \mathbf{GT} = (K, B, S,\, \cdot, \Cup, \,^{*}, \,^{-}, 1_{\mathsf{A}}, \emptyset)$$ be an algebra such that:
\begin{itemize}
\item $K$ is the set of all unambiguous sets of weighted guarded strings over $\Phi$;
\item $B$ is the set of crisp sets of weighted atoms (i.e.~the set of all sets of atoms);
\item $S$ is the set of uniform universal sets of weighted atoms;
\item $X \cdot Y = (X \diamond Y)^{\Cup}$;
\item $X^{*} = \Bigcup_{n \geq 0} X^{n}$ (where $X^{n}$ is defined using $\cdot$);
\item $\,^{-}$ is complementation on $1_{\mathsf{A}}$.
\end{itemize}
\end{definition}
\noindent (The ordinary lifting of $\diamond$ to sets of weighted guarded strings is not necessarily an unambiguous set.)

\begin{theorem}
$\mathbf{T^{G}} \cong \mathbf{GT}$.
\end{theorem}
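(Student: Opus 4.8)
The plan is to exhibit an explicit bijection between the underlying sets of $\mathbf{T^G}$ and $\mathbf{GT}$ and to check that it commutes with every operation and respects the designated subsets $B$ and $S$. Define $\Psi \colon (\mathbb{N}^{\infty})^{GS} \to K$ by
\[
\Psi(\lambda) = \{\, s(\lambda(s)) \mid s \in GS,\ \lambda(s) \neq \infty \,\}.
\]
Since $\lambda$ is a function, each guarded string $s$ receives at most one finite weight, so $\Psi(\lambda)$ is unambiguous and hence lies in $K$; conversely an unambiguous set $X$ determines $\Psi^{-1}(X)(s) = n$ whenever $s(n) \in X$ and $\Psi^{-1}(X)(s) = \infty$ when no $s(n)$ belongs to $X$. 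Thus unambiguous sets of weighted guarded strings correspond exactly to partial functions $GS \rightharpoonup \mathbb{N}$, which in turn are precisely the total functions $GS \to \mathbb{N}^{\infty}$, and $\Psi$ is a bijection.

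The constants and the additive structure are routine. We have $\Psi(1) = \{\, A(0) \mid A \in 1_{\mathsf{A}} \,\} = 1_{\mathsf{A}}$ and $\Psi(0) = \emptyset$, matching the units of $\mathbf{GT}$. For addition, both $(\lambda + \lambda')(s) = \min(\lambda(s), \lambda'(s))$ and the unambiguous union $\Cup$ compute, for each $s$, the least weight assigned to $s$ by either argument, so $\Psi(\lambda + \lambda') = \Psi(\lambda) \Cup \Psi(\lambda')$. Restricted to $B$, an element $\lambda$ is a $\{0^{\mathbb{N}}, \infty\}$-valued function supported on atoms, so $\Psi(\lambda)$ is a crisp set of weighted atoms (i.e.\ a set of atoms), and complementation on either side simply toggles membership among the atoms; hence $\Psi$ maps $B$ bijectively onto the crisp atom-sets and preserves $\,^{-}$. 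On $S$, an element is constant on $1_{\mathsf{A}}$ and $\infty$ off $1_{\mathsf{A}}$, so its image is a uniform universal set of weighted atoms; here one should record the boundary case where the common value is $\infty$ (the semiring zero), whose image is $\emptyset$, and confirm that the restriction of $\Psi$ to $S$ realizes the isomorphism $(S,\cdot,+,1,0) \cong \mathbf{T}$ guaranteed by Theorem~\ref{thm:Partial_to_KAWT}.

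The main obstacle is multiplication, because the two presentations compute it differently. In $\mathbf{T^G}$ the product is the tropical convolution $(\lambda \cdot \lambda')(s) = \min\{\, \lambda(t) + \lambda'(u) \mid s = t \diamond u \,\}$ ranging over all factorizations of $s$ in the partial semigroup $\mathbf{Gu}_{\mathsf{A,T}}$, whereas in $\mathbf{GT}$ the product is $(\Psi(\lambda) \diamond \Psi(\lambda'))^{\Cup}$, the set-lifted coalesced product followed by the collapse to minimal weights. The key observation is that the factorizations $s = t \diamond u$ are exactly the splittings of $s = A_0 p_1 \cdots p_k A_k$ at one of its atoms, and each such splitting, when $t,u$ carry finite weights $\lambda(t),\lambda'(u)$, contributes the single weighted string $s(\lambda(t) + \lambda'(u))$ to $\Psi(\lambda) \diamond \Psi(\lambda')$. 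This lifted product is in general \emph{ambiguous}, since distinct split points may produce the same $s$ with different weights; the operation $\,^{\Cup}$ then selects the minimum over these weights, which is precisely the value of the convolution (entries with $\lambda(t) = \infty$ or $\lambda'(u) = \infty$ are absent from the set and correspond to the $\infty$ summands that do not lower the minimum). Reconciling these two descriptions of the product, while checking that the result is again unambiguous, is the heart of the argument.

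Finally, for iteration I would first show $\Psi(\lambda^{n}) = \Psi(\lambda)^{n}$ by induction on $n$, using $\Psi(\lambda^{0}) = 1_{\mathsf{A}}$ for the base case and the multiplicative step just established. Since $\lambda^{*}(s) = \min_{n \geq 0} \lambda^{n}(s)$ and $X^{*} = \Bigcup_{n \geq 0} X^{n}$ both amount to taking, for each guarded string, the least weight occurring among the powers, and since $\Psi$ preserves these countable joins (they are computed pointwise, and the minima exist because $\mathbb{N}$ is well ordered, keeping the result unambiguous), we get $\Psi(\lambda^{*}) = \Psi(\lambda)^{*}$. This establishes that $\Psi$ is an isomorphism of structures in the signature of Kleene algebras with weights and tests. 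As $\mathbf{T^G}$ is the instance $\mathbf{S^P}$ with $\mathbf{S} = \mathbf{T}$ and $\mathbf{P} = \mathbf{Gu}_{\mathsf{A,T}}$, it is a $^{*}$-continuous Kleene algebra with weights and tests by Theorem~\ref{thm:Partial_to_KAWT}, and the isomorphism transports this structure to $\mathbf{GT}$.
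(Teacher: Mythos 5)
Your proposal is correct and takes essentially the same approach as the paper: your $\Psi$ is exactly the paper's bijection $\tau(\lambda) = \{\, s(n) \mid \lambda(s) = n \text{ and } n \neq \infty \,\}$, and your checks of the constants, pointwise-minimum addition versus $\Cup$, the ambiguous lifted coalesced product collapsed by $(\cdot)^{\Cup}$, the star as a minimum over powers, and the correspondences for $B$ and $S$ mirror the paper's items (i)--(viii). The boundary case you flag --- the constant-$\infty$ weighting, whose image $\emptyset$ is uniform but not universal --- is a point the paper itself glosses over in its item (ii), so your version is, if anything, slightly more careful there.
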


\begin{definition}
A valuation in $\mathbf{GT}$ is called \emph{canonical} iff
\begin{itemize}
\item if $\mathtt{p} \in \mathsf{T}$, then $v(\mathtt{p}) = \{ A\mathtt{\,p\,}B \mid A, B \in 1_{\mathsf{A}} \}$;
\item if $\mathtt{b} \in \mathsf{A}$, then $v(\mathtt{b}) = \{ A \mid \neg \mathtt{b} \text{ does not occur in } A \}$.
\end{itemize}
\end{definition}
  
\section{Using Kleene algebra to reason about weighted programs}\label{sec:using}
 
 Using the (quasi)equational theory of Kleene algebra with weights and tests, we can reason about equivalence of weighted programs in a simple algebraic setting.
 
\begin{example}
Recall the ski rental program and Example \ref{exam:ski}. Using Kleene algebra (see \cite{Kozen1994}, Proposition 2.7), we can show that (\ref{eq:ski}) is equivalent to
\begin{equation}\label{eq:ski_star}
\begin{gathered}
(\{ \texttt{neq0}  \} \, \texttt{sub1}  \, \one)^{*} \\
\big ( \{ \texttt{neq0}  \} \, \texttt{sub1}  \, \skis \, \texttt{end}  \, (\{ \texttt{neq0}  \} \, \texttt{sub1}  \, \one)^{*} \  \big)^{*} \\
\{ \neg \texttt{neq0}  \}
\end{gathered}
\end{equation}
Now consider the following equations:
\begin{gather}
\texttt{sub1}^{n} \, \{ \texttt{neq0} \}  = \mathtt{0} 
\label{eq:ski_assumption1}\\
\texttt{end} \, \{ \texttt{neq0} \}  = \mathtt{0}
\label{eq:ski_assumption2}
\end{gather}
Both correspond to reasonable assumptions in the version of the ski rental scenario where the length of the trip in $n$ days, given the intended interpretation of the variables occurring in the equations: if you do subtract $1$ from $n$ $n$-times, then the test $n > 0$ evaluates to False (\ref{eq:ski_assumption1}); and if you assign $n := 0$, then then the test $n > 0$ evaluates to False (\ref{eq:ski_assumption2}). 

Let us define $\gamma^{+n} :=  1 + \gamma + \gamma^{2} + \ldots + \gamma^{n}$. 
%
 It is possible to show that, in each $^{*}$-continuous KAWT where (\ref{eq:ski_assumption1}) and (\ref{eq:ski_assumption2}) hold, and where $1$ is the top element of $S$ (such as algebras where $S$ is isomorphic to the tropical semiring, for example), the program (\ref{eq:ski_star}) is equivalent to 
\begin{equation}\label{eq:ski_plus}
\begin{gathered}
(\{ \texttt{neq0}  \} \, \texttt{sub1}  \, \one)^{+n} \\
\big ( \mathtt{1} + \{ \texttt{neq0}  \} \, \texttt{sub1}  \, \skis \, \texttt{end}  \big) \{ \neg \texttt{neq0}  \}
\end{gathered}
\end{equation}
%
 Hence, it is possible to show using Kleene algebra with weights and tests that, on each input $n$, the program (\ref{eq:ski_star}) is equivalent to a simpler program (\ref{eq:ski_plus}) that does not involve $^{*}$. Hence, in a sense, (\ref{eq:ski_plus}) is \emph{finite}.
\end{example}

\begin{example}
Take $\mathbf{GT}$ over $\Phi = (\mathsf{A}, \mathsf{T})$ where $\mathsf{A} = \{  \{ \texttt{neq0} \}, \{ \neg\texttt{neq0} \} \}$ and $\mathsf{T} = \{ \texttt{sub1}, \texttt{end} \}$. Take any canonical valuation where $v(\one) = \{ A(1) \mid A \in 1_{\mathsf{A}} \} = \{ \{ \texttt{neq0} \}\Box, \{ \neg \texttt{neq0} \}\Box \}$ and $v(\skis) = \{ A(y) \mid A \in 1_{\mathsf{A}} \} = \{ \{ \texttt{neq0} \}\Box^{y}, \{ \neg \texttt{neq0} \}\Box^{y} \}\}$. Then $v$ maps the program (\ref{eq:ski_plus}) to the set $X$ of weighted guarded strings containing:
\begin{gather*}
\{ \neg\texttt{neq0} \} \big ( 0 \big),\quad \{ \texttt{neq0} \} \texttt{sub1} \{ \texttt{neq0} \} \texttt{end} \{ \neg\texttt{neq0} \}\big ( y \big) \, , \\
\{ \texttt{neq0} \} \texttt{sub1} \{ \neg \texttt{neq0} \} \texttt{end} \{ \neg\texttt{neq0} \}\big ( y \big)\, , \\[1mm]
	\{ \texttt{neq0} \} \texttt{sub1} \{\neg \texttt{neq0} \}\big ( 1 \big)\, ,\\ 
		\{ \texttt{neq0} \} \texttt{sub1} \{ \texttt{neq0} \} \texttt{sub1} \{ \texttt{neq0} \} \texttt{end} \{ \neg\texttt{neq0} \}\big ( 1 + y \big)\, ,\\ 
	\{ \texttt{neq0} \} \texttt{sub1} \{ \texttt{neq0} \} \texttt{sub1} \{ \neg \texttt{neq0} \} \texttt{end} \{ \neg\texttt{neq0} \}\big ( 1 + y \big)\, ,\\
{\small \vdots}\\
 \Big (\{ \texttt{neq0} \} \texttt{sub1} \Big)^{n} \{\neg \texttt{neq0} \}\big ( n \big)\, , \\
		 \Big (\{ \texttt{neq0} \} \texttt{sub1} \Big)^{n} \{ \texttt{neq0} \} \texttt{sub1} \{ \texttt{neq0} \} \texttt{end} \{ \neg\texttt{neq0} \}\big ( n + y \big)\, ,\\ 
	 \Big (\{ \texttt{neq0} \} \texttt{sub1} \Big)^{n} \{ \texttt{neq0} \} \texttt{sub1} \{ \neg \texttt{neq0} \} \texttt{end} \{ \neg\texttt{neq0} \}\big ( n + y \big)
\end{gather*}
Every set $G$ of weighted guarded strings in $\mathbf{GT}$ gives a function $\weight_G$ from $1_{\mathsf{A}}$ to the set of sets of weighted atoms defined by
\begin{equation*}
\weight_G (Y) = \{ A(n) \mid \exists x (xA(n) \in Y \diamond G) \}^{\Cup} \,.
\end{equation*}
Hence, $\weight_G (Y)$ is the set weighted atoms representing the \emph{optimal} execution traces from $G$ starting in an atom in $Y$. If $G$ is finite, then $\weight_G$ is obviously computable.

For example, $\weight_X (\{ \neg \texttt{neq0} \}) =\{ \{ \neg \texttt{neq0} \} \} $  (if the ski rental program is run in a state where $n = 0$, then it halts immediately without accumulating any weight), and
\begin{equation*}
 \weight_X (\{\texttt{neq0} \}) =\{ \{ \neg \texttt{neq0} \}\big ( \mathrm{min}( n, y ) \big) \} \, .
 \end{equation*} 
 That is, an optimal run of (\ref{eq:ski_plus}) from a state where $n \neq 0$ will have weight $\mathrm{min}( n, y )$. This result agrees with intuition and the calculation in \cite{BatzEtAl2022} using the weakest preweighting operator.  
\end{example}

\section{Conclusion}\label{sec:conclusion}
We introduced Kleene algebra with weights and tests, an expansion of Kleene algebra with tests suitable for formalizing reasoning about a simplified version of weighted programs discussed in \cite{BatzEtAl2022}. We described constructions of some ``concrete'' KAWT, and we demonstrated by means of an example that KAWT can be used for reasoning about equivalence and optimal runs of weighted programs. 

Many interesting topics need to be left to future research, including a study of free Kleene algebras with weights and tests, questions of decidability and computational complexity, and a systematic accommodation of the weakest preweighting operator of \cite{BatzEtAl2022} into our framework.


\begin{thebibliography}{10}

\bibitem{AptEtAl2009}
K.~R. {Apt}, F.~S. de~Boer, and E.-R. {Olderog}.
\newblock {\em {Verification of Sequential and Concurrent Programs}}.
\newblock Texts in Computer Science. Springer, 3rd edition, 2009.

\bibitem{BartheEtAl2020}
G.~Barthe, J.-P. Katoen, and A.~Silva.
\newblock {\em Foundations of {P}robabilistic {P}rogramming}.
\newblock Cambridge University Press, 2020.

\bibitem{BatzEtAl2022}
K.~Batz, A.~Gallus, B.~L. Kaminski, J.-P. Katoen, and T.~Winkler.
\newblock Weighted programming: {A} programming paradigm for specifying
  mathematical models.
\newblock {\em Proc. ACM Program. Lang.}, 6(OOPSLA1), apr 2022.

\bibitem{DrosteEtAl2009}
M.~Droste, W.~Kuich, and H.~Vogler, editors.
\newblock {\em {Handbook of Weighted Automata}}.
\newblock Springer, 2009.

\bibitem{FischerLadner1979}
M.~J. Fischer and R.~E. Ladner.
\newblock {Propositional dynamic logic of regular programs}.
\newblock {\em J. Comput. Syst. Sci.}, 18:194--211, 1979.

\bibitem{GomesEtAl2019}
L.~Gomes, A.~Madeira, and L.~S. Barbosa.
\newblock Generalising {KAT} to verify weighted computations.
\newblock {\em Sci. Annals Comput. Sci.}, 29(2):141--184, 2019.

\bibitem{GudderSchelp1970}
S.~P. Gudder and R.~H. Schelp.
\newblock Coordinatization of orthocomplemented and orthomodular posets.
\newblock {\em Proc. Am. Math. Soc.},
  25(2):229--237, 1970.

\bibitem{Kozen1990}
D.~Kozen.
\newblock On {K}leene algebras and closed semirings.
\newblock In B.~Rovan, editor, {\em Int. Symp. on Mathematical
  Foundations of Comp. Sci.}, pages 26--47. Springer, 1990.

\bibitem{Kozen1994}
D.~Kozen.
\newblock A completeness theorem for {K}leene algebras and the algebra of
  regular events.
\newblock {\em Inf. Comput.}, 110(2):366 -- 390, 1994.

\bibitem{Kozen1997}
D.~Kozen.
\newblock Kleene algebra with tests.
\newblock {\em ACM Trans. Program. Lang. Syst.}, 19(3):427–443, May 1997.

\bibitem{KozenSmith1997}
D.~Kozen and F.~Smith.
\newblock Kleene algebra with tests: Completeness and decidability.
\newblock In D.~van Dalen and M.~Bezem, editors, {\em Computer Science Logic},
  pages 244--259, Berlin, Heidelberg, 1997. Springer Berlin Heidelberg.

\bibitem{KuichSalomaa1986}
W.~Kuich and A.~Salomaa.
\newblock {\em {S}emirings, {A}utomata, {L}anguages}.
\newblock EATCS Monographs on Theoretical Computer Science 5. Springer, 1986.

\bibitem{Pratt1976}
V.~Pratt.
\newblock Semantical considerations on {Floyd-Hoare} logic.
\newblock In {\em {7th Annual Symp. on Foundations of Comp. Sci.}},
  pages 109--121. IEEE Computing Society, 1976.

\end{thebibliography}

\appendix

This appendix contains proofs of some of the technical results stated in the main text, and a lemma that will be used in the proof of Theorem 1.

\begin{lemma}\label{lem:psi}
The following hold in each partial semigroup with identity:
\begin{enumerate}
\item $I(x)$ only if $D(x,x)$ and $x = xx$;
\item $I(x)$, $I(y)$ and $D(x,y)$ only if $I(xy)$.
\end{enumerate}
\end{lemma}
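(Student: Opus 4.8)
The plan is to extract one auxiliary observation that does all the work for both clauses, namely that two identities that are $D$-composable must coincide: if $I(x)$, $I(y)$ and $D(x,y)$, then $x = y = xy$. This is the point at which the two unit laws combine to pin $xy$ onto both of its factors at once. From $D(x,y)$ and $I(y)$ the right-unit law gives $x = xy$, while from $D(x,y)$ and $I(x)$ the left-unit law gives $y = xy$; chaining these two equalities forces $x = xy = y$. I expect this collapse step to be the conceptual crux --- once it is available, both clauses are essentially read off.

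For clause (2) it is worth recording first why the statement is not vacuous: the definition imposes no closure condition placing products of identities back into $I$, so the only available route to $I(xy)$ is to show that $xy$ is literally one of the elements already known to lie in $I$. The collapse observation does exactly this: under the hypotheses $I(x)$, $I(y)$, $D(x,y)$ we obtain $xy = x$, and since $I(x)$ holds by assumption, $I(xy)$ follows immediately.

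For clause (1) the only additional ingredient is an existence axiom, used to manufacture a second identity adjacent to $x$. Given $I(x)$, the right-identity existence axiom yields some $y$ with $I(y)$ and $D(x,y)$; the collapse observation then forces $y = x$, so that $D(x,y)$ becomes $D(x,x)$ and the equality $x = xy$ supplied by the collapse becomes $x = xx$. Both conclusions of the clause thus fall out together.

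The one thing to keep straight throughout is the left/right bookkeeping: the right-unit law strips an identity occurring as the right factor and the left-unit law strips one occurring as the left factor, so in the collapse step $x$ and $y$ must be fed into the two laws in opposite positions. Note also that the associativity axioms are not used anywhere; the entire argument runs on the identity-existence and unit clauses alone.
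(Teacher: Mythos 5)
Your proof is correct and takes essentially the same route as the paper's: your ``collapse'' observation (right unit gives $x = xy$, left unit gives $y = xy$, hence $x = y$) is precisely the argument the paper runs inline for clause (1), and clause (2) in both cases reduces to the single substitution $x = xy$ combined with $I(x)$. The only difference is organizational --- you factor the collapse out as a named step reused in both clauses, while the paper inlines it.
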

\begin{proof}
1) For all $x$ there is $y$ such that $D(x,y)$ and $I(y)$. Then $x = xy$. But if also $I(x)$, then $y = xy$. Hence, $x = y$ and so $D(x,x)$ and $x = xx$.

2) If $D(x,y)$ and $I(y)$, then $x = xy$. If also $I(x)$, then $I(xy)$. 
\end{proof}

%

\noindent\textbf{Theorem 1.} {\itshape If $\mathbf{P}$ is a partial semigroup with identity and $\mathbf{S}$ is a complete idempotent semiring, then $\mathbf{S^P}$ is a $^{*}$-continuous Kleene algebra with weights and tests where $$(W, \cdot, +, 1, 0) \cong \mathbf{S} \, .$$}

\noindent \textit{Proof.} 
$(S^{G}, +, 0)$ is clearly a commutative monoid. $(S^{G}, \cdot, 1)$ is a monoid:
\begin{itemize}
\item[(i)] $(\lambda_0 \cdot (\lambda_1 \cdot \lambda_2))(x) = \sum_{y,u} \Big\{ \lambda_0(y) \cdot^{\mathbf{S}} \sum_{z,w} \big\{ \lambda_1 (z) \cdot \lambda_2(w) \mid u = z w \And D(z,w) \big\} \:\big| \: x = y u \And D(y,u) \Big\} =$\\
$= \sum_{y,z,w} \big \{ \lambda_0 (y) \cdot^{\mathbf{S}} \big( \lambda_1 (z) \cdot^{\mathbf{S}} \lambda_2 (w) \big) \mid D(y,zw) \And D(z,w) \And x = y(zw) \big \}$\\
$= \sum_{y,z,w}\big \{ \big ( \lambda_0 (y) \cdot^{\mathbf{S}} \lambda_1 (z) \big) \cdot^{\mathbf{S}} \lambda_2 (w) \mid D(y,zw) \And D(z,w) \And x = y(zw) \big \}$ \\
$= \sum_{y,z,w} \big \{ \big ( \lambda_0 (y) \cdot^{\mathbf{S}} \lambda_1 (z) \big) \cdot^{\mathbf{S}} \lambda_2 (w) \mid D(y,z) \And D(yz,w) \And x = (yz)w \big \}$\\
$ = \sum_{y,z,v,w} \Big \{ \sum \big\{ \lambda_0(y) \cdot^{\mathbf{S}} \lambda_1(z) \: \big| \: v = y z \And D(y,z) \big\} \cdot^{\mathbf{S}} \lambda_2(w) \: \big|\: D(v,w) \And x = v w\Big \}$\\
$= ((\lambda_0 \cdot \lambda_1) \cdot \lambda_2)(x)$.

\item[(ii)] $(\lambda \cdot 1)(x) = \sum_{y,z} \{ \lambda (y) \cdot^{\mathbf{S}} 1(z) \mid D(y,z) \And x = y z \}$\\
$= \sum_{y,z} \{ \lambda (y) \mid D(y,z) \And x = yz  \And I(z)\}$\\
$= \sum_{y} \{ \lambda (y) \mid x = y\} = \lambda (x)$.
\item[] ($(1 \cdot \lambda)(x) = \lambda(x)$ is established similarly.)
\end{itemize}
The fourth equality in (i) follows from the definition of a partial semigroup with identity (the first two conditions).\footnote{We note that the equality cannot be established using the weaker definition of a partial semigroup of \cite{GudderSchelp1970}; hence our strengthening.}
 The third equality in (ii) is established using the definition of a partial semigroup with identity as follows. Left to right: If $D(y,z)$ and $I(z)$, then $yz = y$; so if also $x = yz$, then $x = y$. Right to left: for all $y$ there is $z$ such that $D(y,z)$ and $I(z)$, which means that there is $z$ such that $D(y,z)$ and $I(z)$ and $yz= y$. Hence, if $x = y$, then there is $z$ such that $D(y,z)$, $I(z)$ and $x = yz$.

Next we show that $\cdot$ distributes over $+$:
\begin{itemize}
\item[(iii)] $\big( \lambda_0 \cdot (\lambda_1 + \lambda_2) \big)(x)$\\
$ = \sum_{y,z} \big \{ \lambda_0(y) \cdot^{\mathbf{S}} (\lambda_1 + \lambda_2)(z) \:\big|\: D(y,z) \And x = yz  \big \}$\\ 
$ = \sum_{y,z} \big \{ \lambda_0(y) \cdot^{\mathbf{S}} (\lambda_1(z) +^{\mathbf{S}} \lambda_2(z)) \:\big|\: D(y,z) \And x = yz  \big \}$\\ 
$ = \sum_{y,z} \big \{ \big (\lambda_0(y) \cdot^{\mathbf{S}} \lambda_1(z) \big) +^{\mathbf{S}} \big(\lambda_0(y) \cdot^{\mathbf{S}} \lambda_2(z) \big) \:\big|\: D(y,z) \And x = yz  \big \}$\\
$= \sum_{y,z} \{ \lambda_0(y) \cdot^{\mathbf{S}} \lambda_1(z) \mid D(y,z) \And x = yz \} +^{\mathbf{S}}$\\ \mbox{}\hfill
$ \sum_{y,z} \{ \lambda_0(y) \cdot^{\mathbf{S}} \lambda_2(z) \mid D(y,z) \And x = yz \}$\\
$= (\lambda_0 \cdot \lambda_1)(x) +^{\mathbf{S}} (\lambda_0 \cdot \lambda_2)(x)$\\
$= \big ( (\lambda_0 \cdot \lambda_1) + (\lambda_0 \cdot \lambda_2) \big) (x)$
\item[] ($\big( (\lambda_0 + \lambda_1) \cdot \lambda_2 \big)(x) = \big ( (\lambda_0 \cdot \lambda_2) + (\lambda_1 \cdot \lambda_2)\big)(x)$ is established similarly.)
\end{itemize}

To prove that $0$ is the annihilator element it is sufficient to show that $(\lambda \cdot 0)(x) = 0^{\mathbf{S}} = (0 \cdot \lambda)(x)$ for all $x \in G$:
\begin{itemize}
\item[(iv)] $(\lambda \cdot 0)(x) = \sum_{y,z} \{ \lambda(y) \cdot^{\mathbf{S}} 0(z) \mid D(y,z) \And x = yz\}$\\
$= 0^{\mathbf{S}} = $\\
$\sum_{y,z} \{ 0(y) \cdot^{\mathbf{S}} \lambda(z) \mid D(y,z) \And x = yz\} = (0 \cdot \lambda)(x)$.
\end{itemize}

This proves that $\mathbf{S^{P}}$ is an idempotent semiring. To prove that it is also a $^{*}$-continuous Kleene algebra, it is sufficient to show that it satisfies the $^{*}$-continuity condition:
\begin{equation}
 \delta \lambda^{*} \theta = \sum_{n \geq 0} \delta \lambda^{n} \theta
 \end{equation} 
 for all $\delta, \lambda, \theta \in S^{G}$. It is an easy exercise to show that the Kleene star (quasi)equations follow from $^{*}$-continuity. We reason as follows:
 \begin{gather*}
 (\delta\lambda^{*}\theta) (x) = \sum_{y,z} \Big \{ \delta (y) \cdot^{\mathbf{S}} (\lambda^{*} \theta)(z) \:\big|\: D(y,z) \And yz = x \Big \}\\
 = \sum_{y,z,u,v} \Big \{ \delta (y) \cdot^{\mathbf{S}} \big(\lambda^{*}(u) \cdot^{\mathbf{S}} \theta(v) \big) \:\big|\: \\ D(y,z) \And yz = x \And D(u,v) \And uv = z\Big \}\\
 = \sum_{y,z,u,v} \Big \{ \delta (y) \cdot^{\mathbf{S}} \Big( \big (\sum_{n \geq 0} \lambda^{n} (u) \big) \cdot^{\mathbf{S}} \theta(v) \Big) \:\big|\: \\ D(y,z) \And yz = x \And D(u,v) \And uv = z\Big \}\\
  = \sum_{n \geq 0} \sum_{y,z,u,v} \Big \{ \delta (y) \cdot^{\mathbf{S}} \Big( \lambda^{n} (u) \cdot^{\mathbf{S}} \theta(v) \Big) \:\big|\: \\ D(y,z) \And yz = x \And D(u,v) \And uv = z\Big \}\\
    = \sum_{n \geq 0} \sum_{y,z,u,v} \Big \{ \Big (\delta (y) \cdot^{\mathbf{S}}  \lambda^{n} (u) \Big) \cdot^{\mathbf{S}} \theta(v) \:\big|\: \\ D(y,u) \And yu = w \And D(w,v) \And wv = x\Big \}\\
    = \sum_{n \geq 0} \sum_{w,v} \Big \{ \big (\delta \lambda^{n}\big)(w) \cdot^{\mathbf{S}} \theta(v) \:\big|\: D(w,v) \And wv = x\Big \}\\
    = \sum_{n \geq 0} \Big \{ \big (\delta \lambda^{n} \theta \big) (x) \Big \} 
    = \Big ( \sum_{n \geq 0} \delta \lambda^{n} \theta \Big )(x)
 \end{gather*}
The fourth equality holds since $\mathbf{S}$ is a complete semiring. The fifth equality holds thanks to the definition of a partial semigroup with identity (first two conditions).

 Hence, $\mathbf{S^{P}}$ is a $^{*}$-continuous Kleene algebra. To show that it is a $^{*}$-continuous Kleene algebra with tests, we have to show that $B$ is a Boolean algebra and $\,^{-}$ is complementation on $B$. But this follows easily from the definition: $B$ can be equivalently seen as the power set of $I$ (hence clearly a Boolean algebra), and $\,^{-}$ is obviously defined as complementation on $B$.
 
 In order to show that $\mathbf{S^{P}}$ is a Kleene algebra with weights and tests, we have to show that $W$ is closed under the semiring operations $\cdot$ and $+$, and that $0, 1 \in W$. $W$ is the set of functions that assign $0^{\mathbf{S}}$ to elements outside $I$, and that are constant on $I$. Let us denote the set of such functions as $C$. Both $1$ and $0$ are in $C$, and $C$ is clearly closed under $+$. To show that $C$ is are closed under $\cdot$ as well, we reason as follows. Assume that $\lambda, \lambda' \in C$. First we prove that if $x \in I$, then
 \begin{equation}\label{eq:constant}
 (\lambda \cdot \lambda')(x) = \lambda (x) \cdot^{\mathbf{S}} \lambda'(x)
 \end{equation}
Indeed, 
 \begin{gather*}
 \sum_{y,z} \big \{ \lambda(y) \cdot^{\mathbf{S}} \lambda'(z) \mid D(y,z) \And x = yz \big \} \\
 = \lambda(x) \cdot^{\mathbf{S}} \lambda'(x) 
 \end{gather*}
since
  \begin{gather*}
 \sum_{y,z} \big \{ \lambda(y) \cdot^{\mathbf{S}} \lambda'(z) \mid D(y,z) \And x = yz \big \} \\
=   \sum_{y,z} \big \{ \lambda(y) \cdot^{\mathbf{S}} \lambda'(z) \mid D(y,z) \And x = yz \And I(y) \And I(z) \big \}
 \end{gather*}
 (we may forget about $y,z \not\in I$ since $\lambda, \lambda'$ map them to $0^{\mathbf{S}}$) and 
   \begin{gather*}
\sum_{y,z} \big \{ \lambda(y) \cdot^{\mathbf{S}} \lambda'(z) \mid D(y,z) \And x = yz \And I(y) \And I(z) \big \}\\
= \lambda(x) \cdot^{\mathbf{S}} \lambda'(x) \, .
 \end{gather*}
 The latter holds since the set over which the sum is formed contains at least $\lambda(x) \cdot^{\mathbf{S}} \lambda'(x)$ (Lemma \ref{lem:psi}, part 1) and it contains at most $\lambda(x) \cdot^{\mathbf{S}} \lambda'(x)$ since $\lambda, \lambda'$ are constant on $I$. It follows from (\ref{eq:constant}) that $\lambda \cdot \lambda'$ is constant on $I$ since both $\lambda$ and $\lambda'$ are constant on $I$.
 
 Second, we show that $(\lambda \cdot \lambda')(x) = 0^{\mathbf{S}}$ if $x \notin I$. This follows from Lemma \ref{lem:psi}, part 2: if $x \not\in I$, then $x = yz$ and $D(y,z)$ only if $y \notin I$ or $z \notin I$. Hence, if $x \notin I$, then $(\lambda \cdot^{\mathbf{S}} \lambda')(x) = 0^{\mathbf{S}}$. Hence, $(\lambda \cdot \lambda') \in C$ if $\lambda, \lambda' \in C$.
 
 It remains to establish that $(W, \cdot, +, 1, 0)$ is isomorphic to $\mathbf{S}$. Fix and arbitrary $i \in I$ (note that $I \neq \emptyset$ in all partial semigroups with identity) and define $\phi : W \to \mathbf{S}$:
 \begin{equation*}
 \phi (\lambda) = \lambda(i)
 \end{equation*}
 The mapping $\phi$ is a bijective homomorphism. Homomorphism: $\phi (1) = 1(i) = 1^{\mathbf{S}}$; $\phi (0) = 0(i) = 0^{\mathbf{S}}$; $\phi (\lambda + \lambda') = (\lambda + \lambda')(i) = \lambda(i) +^{\mathbf{S}} \lambda'(i) = \phi(\lambda) + ^{\mathbf{S}} \phi(\lambda')$; $\phi(\lambda \cdot \lambda') = (\lambda \cdot \lambda')(i) = \lambda(i) \cdot^{\mathbf{S}} \lambda'(i)$ by (\ref{eq:constant}) $= \phi(\lambda) \cdot^{\mathbf{S}} \phi(\lambda')$. Surjective: $W$ is the set of \emph{all} functions that are constant on $I$ and assign $0^{\mathbf{S}}$ to elements $x \notin I$. Injective: if $\phi(\lambda) = \phi(\lambda')$, then $\lambda(i) = \lambda'(i)$, and then $\lambda = \lambda'$ since $\lambda$ and $\lambda'$ are assumed constant on $I$.
\qed

\

\noindent\textbf{Theorem 2.} $\mathbf{T^{G}} \cong \mathbf{GT}$.

\smallskip

\noindent\textit{Proof.} Define $\tau : (\mathbb{N}^{\infty})^{GS} \to K^{\mathbf{GT}}$ such that
\begin{equation*}
\tau (\lambda) = \{ s (n) \mid \lambda (s) = n \And n \neq \infty \} \, .
\end{equation*}

The function $\tau$ is clearly a bijection between $(\mathbb{N}^{\infty})^{GS}$ and $K^{\mathbf{GT}}$. Moreover,
\begin{itemize}
\item[(i)] $B^{\mathbf{GT}} = \{ \tau (\lambda) \mid \lambda \in B\}$ since $\lambda \in B$ iff $\lambda : GT \to \{ 0^{\mathbb{N}}, \infty \}$ and $\lambda (s) = 0^{\mathbb{N}}$ only if $s \in 1_{\mathsf{A}}$ iff $\tau (\lambda)$ is a crisp set of weighted atoms;

\item[(ii)] $S^{\mathbf{GT}} = \{ \tau (\lambda) \mid \lambda \in S \}$ since $\lambda \in S$ iff $\lambda (s) = \infty$ for $s \notin 1_{\mathsf{A}}$ and $\lambda$ is constant on $1_{\mathsf{A}}$ iff  $\tau(\lambda)$ is an uniform set of weighted atoms.
\end{itemize}
Next we need to show that $\tau$ is a homomorphism:
\begin{itemize}
\item[(iii)] $\tau (1) = \{ s(0) \mid s \in 1_{\mathsf{A}} \} = 1^{\mathbf{GT}}$;

\item[(iv)] $\tau (0) = \emptyset = 0^{\mathbf{GT}}$;

\item[(v)] $s(n) \in \tau (\lambda \cdot \lambda')$
$\iff$ $(\lambda \cdot \lambda')(s) = n \neq \infty$\\
$\iff$ $n \neq \infty$ and $n = \mathrm{min} \{ \lambda (t) +^{\mathbb{N}^{\infty}} \lambda'(u) \}$ for $t, u \in GS$ such that $t \diamond u = s$\\
$\iff$ $\exists t, u \in GS : s = t \diamond u$ and $n = \lambda (t) +^{\mathbb{N}^{\infty}} \lambda'(u)$ and $\lambda(t) \neq \infty$ and $\lambda'(u) \neq \infty$ and \\ \mbox{}\hfill 
	$\forall t', u' \in GS (s = t' \diamond u' \to n \leq^{\mathbb{N}^{\infty}} \lambda(t') + \lambda'(u'))$\\
$\iff$ $s(n) \in \tau(\lambda) \diamond \tau (\lambda')$ and \\ \mbox{}\hfill 
	$\forall t', u' \in GS (s = t' \diamond u' \to n \leq^{\mathbb{N}^{\infty}} \lambda(t') + \lambda'(u'))$\\
$\iff$ $s(n) \in \tau(\lambda) \cdot^{\mathbf{GT}} \tau(\lambda')$;	

\item[(vi)] $s(n) \in \tau (\lambda + \lambda')$ $\iff$ $(\lambda + \lambda')(s) = n \neq \infty$\\
$\iff$ $n = \mathrm{min} \{ \lambda(s), \lambda'(s) \}$ and $n \neq \infty$\\
$\iff$ $s(n) \in \tau (\lambda) \cup \tau(\lambda')$ and\\ \mbox{}\hfill
		$\forall m \in \mathbb{N} (s(m) \in \tau (\lambda) \cup \tau (\lambda') \to n \leq^{\mathbb{N}^{\infty}} m)$\\
$\iff$ $s(n) \in \tau(\lambda) \Cup \tau(\lambda')$

\item[(vii)] $s(n) \in \tau (\lambda^{*})$ $\iff$ $n = \mathrm{min}_{m \in \mathbb{N}} \{ \lambda^{m}(s) \}$\\
$\iff$ $\exists m \in \mathbb{N}: n = \underbrace{(\lambda \cdot \ldots \cdot \lambda)}_{m\text{-times}} (s)$ and \\ \mbox{}\hfill
		$\forall k \big ( n \leq^{\mathbf{\mathbb{N}}} \underbrace{(\lambda \cdot \ldots \cdot \lambda)}_{k\text{-times}} (s) \big)$\\
$\iff$ $s(n) \in \bigcup_{m \in \mathbb{N}} \tau(\lambda)^{m}$ and\\ \mbox{}\hfill  $n = \mathrm{min} \big\{ n' \: \big| \: s(n') \in \bigcup_{k \in \mathbb{N}} \tau(\lambda)^{k} \big\}$\\
$\iff$ $s(n) \in \Big ( \bigcup_{m \in \mathbb{N}} \tau(\lambda)^{m} \Big)^{\Cup} = \Bigcup_{m \in \mathbb{N}} \tau(\lambda)^{m}$\\
$\iff$ $s(n) \in \big ( \tau (\lambda)\big)^{*^{\mathbf{GT}}}$;		

\item[(viii)] for $\lambda \in B$: $s(0^{\mathbb{N}}) \in \tau (\overline{\lambda})$\\ $\iff$ $\overline{\lambda}(s) = 0^{\mathbb{N}}$ $\iff$ $\lambda(s) = \infty$\\
$\iff$ $s(0^{\mathbb{N}}) \notin \tau(\lambda)$ $\iff$ $s(0^{\mathbb{N}}) \in \overline{\tau(\lambda)}$.
\end{itemize}
\qed

Note that we didn't need to assume in the proof that $\mathbf{GT} \in \mathsf{KAWT}$, but this follows from Theorem 2.

\

\textbf{Example 7.} {\itshape If $\mathbf{K} \in \mathsf{KAWT}$ such that (\ref{eq:ski_assumption1}) and (\ref{eq:ski_assumption2}) holds in $\mathbf{K}$, and $\mathtt{1}$ is the top element of $S^{\mathbf{K}}$, then (\ref{eq:ski_star}) is equivalent to (\ref{eq:ski_plus}).}

\textit{Proof.} By $^{*}$-continuity and (\ref{eq:ski_assumption2}), 
\begin{equation*}
\{ \texttt{neq0}  \} \, \texttt{sub1}  \, \skis \, \texttt{end}  \, (\{ \texttt{neq0}  \} \, \texttt{sub1}  \, \one)^{*} 
\end{equation*}
is equivalent to
\begin{equation*}
\{ \texttt{neq0}  \} \, \texttt{sub1}  \, \skis \, \texttt{end}
\end{equation*}
and so the second line of (\ref{eq:ski_star}) is equivalent to 
\begin{equation*}
\mathtt{1} + \{ \texttt{neq0}  \} \, \texttt{sub1}  \, \skis \, \texttt{end} \, .
\end{equation*}
By $^{*}$-continuity, 
\begin{equation*}
(\{ \texttt{neq0}  \} \, \texttt{sub1}  \, \one)^{*}
\end{equation*}
is equivalent to
\begin{equation}\label{eq:ski_sum}
\sum_{n \geq 0} \big ( \{ \texttt{neq0}  \} \, \texttt{sub1}  \, \one \big )^{n}
\end{equation}
However, since $\{ \texttt{neq0}  \}, \one \leq \mathtt{1}$,
\begin{equation*}
\big ( \{ \texttt{neq0}  \} \, \texttt{sub1}  \, \one \big )^{n+1}
\end{equation*}
is less or equal to
\begin{equation*}
 \texttt{sub1}^{n} \big ( \{ \texttt{neq0}  \} \, \texttt{sub1}  \, \one \big) \, ,
 \end{equation*} 
 which equals $\mathtt{0}$ by (\ref{eq:ski_assumption1}). Hence, (\ref{eq:ski_sum}) is equivalent to
 \begin{equation*}
 \big ( \{ \texttt{neq0}  \} \, \texttt{sub1}  \, \one \big )^{+n} \, .
 \end{equation*}
 \qed

\end{document}